 \newcommand{\bb}[1]{{\mathbb{#1}}}
 \newcommand{\bea}{\begin{eqnarray}}
\newcommand{\eea}{\end{eqnarray}}
\newcommand{\be}{\begin{equation}}
\newcommand{\ee}{\end{equation}}
\newcommand{\bangle}{\atopwithdelims \langle \rangle}
\newtheorem{theo}{Theorem}[section]
\newtheorem{lem}[theo]{Lemma}
\begin{document}


\title{Stationary currents in particle systems with constrained hopping rates}

\author{Emilio N. M. Cirillo}
\affiliation{Dipartimento di Scienze di Base e Applicate per
             l'Ingegneria\\ Sapienza Universit\`a di Roma,
             via A.\ Scarpa 16, 00161, Roma, Italy.}
\email{emilio.cirillo@uniroma1.it}
\thanks{ENMC expresses his thanks to ICMS (TU/e, The Netherlads) for
kind hospitality and financial support.}

\author{Matteo Colangeli}
\affiliation{Gran Sasso Science Institute, Viale F. Crispi 7, 
00167 L' Aquila, Italy.}
\email{matteo.colangeli@gssi.infn.it}

\author{Adrian Muntean}
\affiliation{Department of Mathematics and Computer Science,
             Karlstad University, Sweden.}
\email{adrian.muntean@kau.se}
	
\abstract{We study the effect on the stationary currents  of constraints
affecting the hopping rates in stochastic particle systems.
In the framework of Zero Range Processes with drift within a finite volume, we discuss how the current is
reduced by the presence of the 
constraint and deduce exact formulae, fully explicit in some
cases.
The model discussed here has been introduced in Ref.\ \cite{ccm2015}
and is relevant for the description of pedestrian motion 
in elongated dark corridors, where the constraint on the hopping rates can be related to
limitations on the interaction distance among pedestrians.
}

\keywords{Stochastic particle systems, Threshold effects, Stationary Currents, Pedestrian Dynamics.}

\maketitle

\section{Introduction}
\label{s:int}
This paper reports  on exact results for the calculation of stationary currents
for a class of one--dimensional zero--range
processes with threshold modeling the
dynamics of pedestrians walking in an elongated corridor with no visibility.
Modeling the corridor as a one--dimensional array of discrete sites, we assume
that more pedestrians (particles) can occupy the same site (forming, possibly,
social structures) and no interaction between these particles takes place.

The evolution of the pedestrians is determined by the level of occupancy of the
sites. The main specific feature is
the presence of the {\em activation threshold} which keeps the ``escape rate''
minimal until a certain occupation number on the site, corresponding to the threshold, is reached.
The threshold can be related to the limited interaction distance
among pedestrians Ref.\ \cite{ccm2015}:
only if a site is sufficiently populated pedestrians
can efficiently exchange information and move coherently to a neighboring
spot. The approach can
be further extended to consider the presence of multiple  thresholds (e.g. communication
saturation thresholds  cf.\ Ref.\ \cite{MMS}, de--centralized
task--allocation thresholds cf.\ Ref.\ \cite{ANTS}, and so on).
However, in that case of multiple thresholds exact
calculations are out of reach.
Our attempt is particularly relevant for the construction of exact microscopic and
macroscopic fundamental diagrams (explicit relationships between the
pedestrians speed and local density, see Ref. \cite{FD})
for pedestrians motion in one--dimensional models.

The distinguishing feature of the model is the presence of the 
activation threshold whose meaning for pedestrian 
motion has been discussed above
(see, also, Ref.\ \cite{CM2013} for the discussion of threshold 
effects in pedestrian dynamics in the framework of a two--dimensional 
model).
Nevertheless, different interpretations are possible: for instance, 
in the framework of 
{\em Porous Media}, the bulk porosity estimates how many particles can be 
accommodated in a cell and this connects to the activation threshold. 
Activation thresholds are also meaningful in pure 
{\em mechanical applications}: imagine that a device is equipped with 
valve--like door whose opening results from the balance between the 
pressure inside the cell and an external force exerted by a spring.
A minimal -- structural -- opening of the door, with the spring 
maintained at rest, corresponds to the presence of an activation threshold. 
A {\em psychologico--geometrical} interpretation is also possible: 
the activation threshold can be indeed regarded as a measure of the domain of communication between 
the individuals and the level this communication
is processed towards a 
decision on the motion. \\
Moreover, the mathematical framework developed in this work paves also the way for a deeper understanding, through the prism of stochastic dynamics, of the kinetic mechanisms giving rise to the hydrodynamic properties observed, e.g., in the study of the transport of a gas or a liquid through polymeric matrices, see e.g. \cite{George, Ohkubo}. Interestingly, note that despite the microscopic dynamics described in the sequel is not related to any energy function, the activation threshold present in our model connects, quite naturally, with the activation energy occurring in the Arrhenius expression for the rate of a chemical reaction, with the site occupancy (a random variable) playing the role of a temperature.\\
It is also worth mentioning that a suitable variant of the model discussed below was also introduced in the literature, see Ref. \cite{Evans96}, to investigate the thermodynamic properties of heterogeneous materials, in which, e.g., a single site may be equipped with a hopping rate whose dependence on the site occupancy differs from the rule assigned to the remaining sites. This situation was shown to give rise to interesting physical phenomena, cf. also Ref. \cite{Angel}.

Coming back to the original problem, 
the microscopic dynamics is 
modeled here via a Zero Range Process (ZRP), cf.
Ref.\ \cite{18},
in which the particles hop, with 
a certain intensity and an assigned probability, 
to the neighboring sites and in which the threshold affects the
intensity of the jumps from each lattice site.
In the framework of ZRP models, thresholds are not a novelty, see
e.g. Refs.\ \cite{CG15,GS}
where condensation and metastability effects
have been studied. In those papers the value of the threshold is
scaled with the size of the system and distinguishes between
``fast'' sites, namely those with a sufficiently small number of particles,
and ``slow'' sites, the remaining ones.

We exploit the threshold in a different fashion, see
Ref.\ \cite{ccm2015,MMS}: indeed, for our application,
the hopping rate must be increasing with the number of particles on
the spot and the threshold is used to activate the regime in which
the rate starts to increase linearly with the number of particles.
In Ref.\ \cite{ccm2015,MMS} the model has been studied in the
hydrodynamic limit, whereas in this Note we solve the model
for  finite values of the lattice size and the number of particles. 
In particular, here we compute the steady state current, which is, even in the
pedestrian motion interpretation, the main quantity of interest.

In the absence of threshold, the stationary current increases
proportionally to the number of particles, whereas it tends
to an asymptotic value when the threshold is equal to the number
of particles. In such a case no site exceeds the
threshold and the hopping rate stays always equal to its minimal
value. We compute the steady current for any intermediate
value of the threshold and, in particular, we prove that
thresholds proportional to the number of particles are sufficient
to induce the asymptotic limiting regime.

We remark that the focus of the paper is on the combined effect of a bias (i.e., a driving force, breaking the condition of detailed balance) and an activation threshold in presence of a \textit{finite number} of particles moving on a \textit{finite lattice}, endowed with periodic boundary conditions. Thus, we shed light on the finite size corrections to the value of the stationary current obtained in the hydrodynamic limit of the model (see Ref. \cite{16} for mathematical details): this program is pursued, here, by evaluating the canonical partition function, which can be explicitly read out in a few cases.

The paper is organized as follows. In Section~\ref{s:modello} we
introduce the model and define the stationary current.
In Section~\ref{s:canpart} we derive the expression of the partition
function of the model that is exploited in Section~\ref{s:currents}
to compute the current and to compare theoretical results to
numerical simulations.\\
Conclusions are drawn in Section~\ref{s:concl}.

\section{The model}
\label{s:modello}
\par\noindent
We consider a positive integer $L$ and define a ZRP on the finite torus 
$\Lambda:=\{1,\dots,L\}\subset\mathbb{Z}$, cf.\ Refs.\ \cite{16,15}.
We fix $N\in\bb{Z}_+$ and consider the finite \emph{state space}
$\Omega_{L,N}$:
\begin{equation}
\label{mod000}
\Omega_{L,N}
=
\Big\{
\eta\in\{0,\dots,N\}^\Lambda,\,
\sum_{x=1}^L\eta_x=N
\Big\}
\,\,.
\end{equation}
Given
$\eta=(\eta_1,\dots,\eta_L)\in\Omega_{L,N}$
the integer $\eta_x$ is called \emph{number of particle}
at site $x\in\Lambda$ in the \emph{state} or \emph{configuration}
$\eta$.
Pick the \textit{threshold} $T\in\{1,\dots,N\}$ and define the
\emph{intensity}
\begin{equation}
\label{soglia}
g_T(0)=0,\;\;\;
g_T(k)=1\;\; \textrm{ for } 1\leq k\le T,\;\;\;
g_T(k)=k-T+1\;\; \textrm{ for } k> T
\;\;\;.
\end{equation}

The ZRP considered in this paper
is the continuous time Markov process $\eta_t\in\Omega_{L,N}$, with $t\ge0$,
such that each
site $x\in\Lambda$ is updated with intensity $g_T(\eta_x(t))$
and, once such a site $x$ is chosen, a particle jumps to the
neighboring sites $x-1$ and $x+1$ with probabilities, respectively,
$1-p$ and $p$ (recall periodic boundary
conditions are imposed).
Note that the equilibrium condition of detailed balance holds only for $p=1/2$.
The above described jump process corresponds, hence, to an inhomogeneous Poisson process with hopping rates
\be
r_{x,x+1}(\eta)=g(\eta_x) p
,\;\;
r_{x,x-1}(\eta)=g(\eta_x) (1-p)
\;\;\textrm{ and }\;\;
r_{x,y}(\eta)=0
\;\;\textrm{ for } y\neq x-1,x+1
\;\;.
\label{hop}
\ee

Given the threshold $T$,
the intensity function is constantly equal to one up to $T$
and then it increases linearly with the number of particles occupying
the site. In other words, all sites with number of particles
smaller or equal to $T$ are treated equally by the dynamics,
whereas the updating of those sites with more than
$T$ particles is favored.
For this reason $T$ is called \emph{activation} threshold.

We note that in the limiting case $T=1$
the intensity function becomes
$g_1(k)=k$, for $k>0$, and thus
the well known
\emph{independent particle} model is recovered.
A different limiting situation is the one in which
the intensity function is constantly equal to $1$ for any $k\ge1$ and
equal to zero for $k=0$.
In this case a Zero Range
process whose configurations can be mapped to the
simple exclusion model states is found.
We shall refer to the latter case as to the
\emph{simple exclusion}--like model. Such a model is
found, in our set--up, when $T=N$ .
We stress that one of the interesting issues of our model is the
fact that it is able to tune between two very different dynamics:
the independent particle and simple exclusion--like behavior.

It can be proven (see Ref.\ \cite{18,16})
that the invariant measure of the ZRP process is a product measure of the form
\begin{equation}
\label{muinv}
\mu_{L,N,T}(\eta)
=
\frac{1}{Z_{L,N,T}}
\prod_{\substack{x=1,\dots,L:\\\eta_x\neq0}}
\frac{1}{g_T(1)\cdots g_T(\eta_x)}
\end{equation}
for any $\eta\in\Omega_{N,L}$, where the
\emph{partition function}
$Z_{L,N,T}$ is the normalization constant.

The main quantity of interest, in our study, is the \emph{stationary
current} representing the difference between the
average number of particles crossing a bond between two given
sites from the left to the right and that in the opposite direction.
More precisely, since periodic boundary conditions are imposed,
the current does not depend on the chosen bond and is defined as
\begin{equation}
\label{curr00}
J_{L,N,T}
:=\langle r_{x,x+1}-r_{x+1,x}\rangle_{L,N,T}
=(2p-1)\langle g_T\rangle_{L,N,T}
\end{equation}
where we introduced the notation
$\langle f(\eta)\rangle_{L,N,T}
 :=\sum_\eta \mu_{L,N,T}(\eta) f(\eta)$
for any function $f:\Omega_{L,N,T}\to\bb{R}$.

A general expression for the expectation, with respect to the
invariant measure \eqref{muinv}, of the intensity function can
be provided (see Ref.\ \cite{18}). More precisely it holds
\begin{equation}
\label {meang}
\langle g_T\rangle_{L,N,T}
=
\frac{Z_{L,N-1,T}}{Z_{L,N,T}}
\end{equation}
Equations
\eqref{curr00} and \eqref{meang}
yield the following expression
\begin{equation}
\label{curr}
J_{L,N,T}=
(2p-1)
\frac{Z_{L,N-1,T}}{Z_{L,N,T}}
\end{equation}
for steady state current.

\section{Canonical partition function}
\label{s:canpart}
The final goal of this paper is computing the steady
state current at finite volume for any value of the threshold.
In order to apply
equation \eqref{curr} we need an explicit expression
of the partition function.

In this Section we shall
prove an exact formula expressing the partition function
in terms of sums of factorials and
yielding
explicit expression of the partition function in the
limiting cases $T=1$ and $T=N$.

We first state a combinatorial lemma whose proof is based on
techniques borrowed from Ref.\ \cite{17}.
Given the positive integers $i,j,k$, we let
$\Phi(i,j,k)$ be the
\textit{number of ways in which}
$j$
\textit{indistinguishable balls can be distributed into} $i$
\textit{distinguishable urns with at most} $k$
\textit{balls into each urn}.
Note that for $j> ki$ we shall understand $\Phi(i,j,k)=0$.
For $i,j$ positive integers, we also let
\begin{equation}
\label{storto}
{i \bangle j}:=\binom{i+j-1}{j}
\end{equation}
which can be proven to be equal to the
number of ways in which
$j$
indistinguishable balls can be distributed into $i$
distinguishable urns, see Ref.\ \cite[section~3.2.12]{17}.

\begin{lem}
\label{lemmaPhi}
Let $i,j,k$ positive integers such that $j\le ki$, then
\be
\label{Phigen}
\Phi(i,j,k)=\sum_{s=0}^{\overline{s}} (-1)^s {i \bangle j-s(k+1)} \binom{i}{s}
\ee
where
$\overline{s}:=\min\{i,\lfloor j/(k+1)\rfloor\}$.
\end{lem}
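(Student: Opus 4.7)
The plan is to prove the formula by a standard inclusion--exclusion argument on the urns that violate the capacity constraint.

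First I would recall that, without any cap, the number of ways to distribute $j$ indistinguishable balls into $i$ distinguishable urns is exactly ${i \bangle j}$, as stated just before the lemma. This serves as the ``total'' term ($s=0$) in the formula and gives the template for all other counts that will appear.

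Next, for each subset $S\subseteq\{1,\dots,i\}$ of urns, let $A_S$ denote the set of distributions in which every urn $m\in S$ contains \emph{at least} $k+1$ balls (a violation of the cap). The key observation is that $|A_S|$ depends only on $s:=|S|$: by the substitution $\eta'_m=\eta_m-(k+1)$ for $m\in S$, counting configurations in $A_S$ reduces to counting unrestricted distributions of the remaining $j-s(k+1)$ balls into the same $i$ urns, hence
\[
|A_S|={i \bangle j-s(k+1)}
\]
whenever $j\ge s(k+1)$, and $|A_S|=0$ otherwise. This is the step that carries the whole argument; the rest is bookkeeping.

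Then I would apply the inclusion--exclusion principle to compute
\[
\Phi(i,j,k)=\Big|\bigcap_{m=1}^{i} A_{\{m\}}^{c}\Big|
=\sum_{S\subseteq\{1,\dots,i\}}(-1)^{|S|}|A_S|,
\]
group the $\binom{i}{s}$ subsets of each fixed cardinality $s$, and substitute the expression for $|A_S|$ derived above. This yields the signed sum in \eqref{Phigen}.

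Finally, I would justify the upper limit $\overline{s}=\min\{i,\lfloor j/(k+1)\rfloor\}$: the bound $s\le i$ is forced by the cardinality of $S$, while the bound $s\le\lfloor j/(k+1)\rfloor$ comes from the convention (already recorded in the statement) that ${i\bangle j-s(k+1)}=0$ as soon as $j-s(k+1)<0$, so the corresponding terms can simply be truncated. I do not anticipate a genuine obstacle here; the only point requiring care is to keep the signs and the vanishing convention ${i\bangle\cdot}=0$ for negative argument aligned with the inclusion--exclusion, so that the formula remains valid in the edge case $j=ki$ as well.
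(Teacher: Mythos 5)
Your inclusion--exclusion argument is correct: the shift $\eta'_m=\eta_m-(k+1)$ on the violating urns gives the bijection you need, the grouping by $|S|=s$ produces the factor $\binom{i}{s}$, and the truncation at $\overline{s}$ is exactly the observation that $\bigl\langle{i\atop j-s(k+1)}\bigr\rangle$ vanishes for a negative lower entry. However, the paper takes a different (if closely related) route: it omits the proof and states that it rests on generating functions, i.e.\ one reads off $\Phi(i,j,k)$ as the coefficient of $x^j$ in $\bigl((1-x^{k+1})/(1-x)\bigr)^i$, expands $(1-x^{k+1})^i$ by the binomial theorem and $(1-x)^{-i}$ as $\sum_j {i \bangle j}\,x^j$, and multiplies. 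The two arguments are essentially algebraic shadows of one another --- your sets $A_S$ correspond to the monomials $x^{s(k+1)}$ in the expansion --- but each buys something: your version is self-contained and elementary, and makes the vanishing convention for negative arguments transparent as ``no term to subtract''; the generating-function version dispenses with the truncation discussion entirely (negative powers of $x$ simply never arise) and is the form the paper reuses when it later specializes $\Phi$ in Eqs.~\eqref{PhiBE} and \eqref{PhiFD}. Either proof is acceptable; yours fills the gap the paper deliberately leaves.
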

We omit the simple proof of the Lemma \eqref{lemmaPhi}. 
It suffices to remark, here, that the proof relies on the theory of generating functions, as presented, e.g., in Ref. \cite[section 3.3.2]{17}.

The expression of $ \Phi(i,j,k)$ provided by the Lemma~\eqref{lemmaPhi}
attains a simpler form in a few cases.
For instance, when $j=k$, no constraint is imposed on the
allocation of balls among the urns, hence
one should find
\begin{equation}
\label{PhiBE}
\Phi(i,j,j)= {i \bangle j}
\end{equation}
This is indeed the case, since it holds $\bar{s}=0$.
Note that this is the result which is found when, in the Bose--Einstein
statistics, one counts the number of ways in which $j$ particles
can be distributed among $i$ states.
A second relevant case is the one in which at most one particle
can be allocated into each urn. The corresponding value of $\Phi$ may then be derived either from
Eq. \eqref{Phigen}, by using the fact that, since $k=1$ and $i\ge j$, it holds
$\overline{s}=\lfloor j/2\rfloor$, or from the combinatorial definition of $\Phi(i,j,1)$. In either case, one obtains
\begin{equation}
\label{PhiFD}
\Phi(i,j,1)=\binom{i}{j}
\end{equation}
Note that this is the result one encounters in the Fermi--Dirac
statistics, in which one counts the number of ways in which $j$ particles
can be distributed among $i$ states with the limitation, due to the
exclusion principle, of
at most one particle per state.

We can now state our main result about the canonical
partition function of the ZRP model.
Recall, see Eq. \eqref{muinv}, that
\be
\label{Z}
Z_{L,N,T}=\sum_{\eta:|\eta|=N} \prod_{\substack{x=1,\dots,L:\\\eta_x\neq0}}
\frac{1}{g_T(1)\cdots g_T(\eta_x)}
\;\;.
\ee

\begin{theo}
\label{main}
For $N,L$ positive integers
\be
\label{partfunFD}
Z_{L,N,1} = \frac{L^N}{N!}
\ee
Moreover, for any $T\ge2$ and $L\ge \lceil{N/(T-1)}\rceil$
\be
\label{partfun}
Z_{L,N,T}=\Phi(L,N,T-2)+\sum_{m=1}^{\overline{m}}\binom{L}{m}\sum_{n=0}^{\overline{n}}\Phi(L-m,N-[(T-1)m+n],T-2)\frac{m^n}{n!}
\ee
where
$\overline{m}= \lfloor N/(T-1)\rfloor$ and
$\overline{n}(m)=N-(T-1)m$.
\end{theo}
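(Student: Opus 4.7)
The plan is to expand $Z_{L,N,T}$ directly by grouping configurations according to the number of "densely occupied" sites, and to exploit a unifying identity for the per-site weight so that the high-occupancy contribution can be collapsed by a multinomial identity. The case $T=1$ is immediate: since $g_1(k)=k$ for $k\ge1$, the product $g_1(1)\cdots g_1(\eta_x)$ equals $\eta_x!$, and hence
\[
Z_{L,N,1}=\sum_{|\eta|=N}\prod_{x=1}^{L}\frac{1}{\eta_x!}=\frac{1}{N!}\sum_{|\eta|=N}\binom{N}{\eta_1,\dots,\eta_L}=\frac{L^N}{N!}
\]
by the multinomial theorem applied to $L^N=(1+\cdots+1)^N$, which is \eqref{partfunFD}.

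For $T\ge 2$ the preparatory observation is that the single-site weight $w_T(k):=1/(g_T(1)\cdots g_T(k))$ (with $w_T(0):=1$) admits from \eqref{soglia} the unified form $w_T(k)=1$ for $0\le k\le T-2$ and $w_T(k)=1/(k-T+1)!$ for $k\ge T-1$. The point worth stressing is that $k=T-1$ and $k=T$ give $1/0!=1/1!=1$, matching the plateau of $g_T$. This lets us unambiguously label a site \emph{heavy} when $\eta_x\ge T-1$ and \emph{light} when $\eta_x\le T-2$, and it treats both boundary values $\eta_x\in\{T-1,T\}$ uniformly under the factorial formula.

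I would then decompose $Z_{L,N,T}$ according to the set of heavy sites. After choosing which $m$ sites are heavy (a factor $\binom{L}{m}$), parametrise the heavy occupancies by $\eta_x=(T-1)+d_x$ with $d_x\ge 0$ and set $n:=\sum d_x$. The heavy-site weight factorises to $\prod_x 1/d_x!$, and the multinomial identity
\[
\sum_{\substack{d_1+\cdots+d_m=n\\ d_i\ge 0}}\prod_{i=1}^{m}\frac{1}{d_i!}=\frac{m^n}{n!}
\]
(equivalently, the expansion of $(1+\cdots+1)^n$) collects all heavy configurations with prescribed excess $n$. The light sites then have to distribute the remaining $N-(T-1)m-n$ particles with at most $T-2$ per site, contributing exactly $\Phi(L-m,N-(T-1)m-n,T-2)$ by the very definition of $\Phi$. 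The summation over $n$ is bounded by particle conservation ($\overline n=N-(T-1)m$) and the summation over $m$ by $\overline m=\lfloor N/(T-1)\rfloor$; the hypothesis $L\ge\lceil N/(T-1)\rceil$ guarantees $L\ge\overline m$ so that $\binom{L}{m}>0$ throughout. The $m=0$ term equals $\Phi(L,N,T-2)$, since $0^n/n!=\delta_{n,0}$, and the remaining $m\ge 1$ terms assemble to \eqref{partfun}.

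The only delicate point is the placement of the heavy/light boundary at $T-1$ rather than at $T$ or $T+1$: this is precisely what the unified formula for $w_T$ is designed to legitimise, since it makes the weight $1$ of a site with $\eta_x\in\{T-1,T\}$ numerically identical to the value $1/d_x!$ delivered by the factorial formula at $d_x\in\{0,1\}$. Once this bookkeeping is fixed, the rest of the argument is a straightforward rearrangement of the sum, and no constraint on $L$ beyond the stated one enters the calculation.
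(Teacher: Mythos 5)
Your proof is correct and follows essentially the same route as the paper: the multinomial theorem for $T=1$, and, for $T\ge2$, a decomposition of configurations according to the set of $m$ sites with $\eta_x\ge T-1$, collapsing the excess occupancies via $\sum_{d_1+\cdots+d_m=n}\prod_i 1/d_i!=m^n/n!$ and counting the remaining sites (at most $T-2$ particles each) with $\Phi(L-m,\cdot\,,T-2)$. Your explicit remark that the single-site weight equals $1/(k-T+1)!$ uniformly for all $k\ge T-1$ just makes the placement of the heavy/light boundary slightly more transparent than in the paper's own write-up; the argument is otherwise the same.
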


\begin{proof}
Consider, first, the case $T=1$. By \eqref{soglia} and
\eqref{Z} we get
\begin{displaymath}
\label{Z1}
Z_{L,N,1}
=
\sum_{\eta:|\eta|=N} \prod_{x=1,\dots,L} \frac{1}{\eta_x!}
\end{displaymath}
where we also used the convention $0!=1$. Thus,
equation \eqref{partfunFD} follows immediately by applying the
multinomial theorem, see, e.g., Ref.\ \cite[equation~(3.35)]{17}.

Consider, now, the case $T\ge2$.
Call $m\in[0,\overline{m}]$ the number of sites in which the number of
particles is larger than $T-1$ and
$n\in[0,\overline{n}]$ the number of particles that, for a given $m$,
exceeds the value $T$. Given a configuration $\eta$, let also
\be
n_x=
0  \;\;\textrm{ for } \eta_x\le T-1
\;\;\;\textrm{ and }\;\;\;
n_x=
\eta_x-(T-1)  \;\;\textrm{ for } \eta_x > T-1
\;\;.
\ee
Then, the partition function can be rewritten as
\begin{displaymath}
\label{Zproof}
Z_{L,N,T}=
\Phi(L,N,T-2)
+
\sum_{m=1}^{\overline{m}}\binom{L}{m}\sum_{n=0}^{\overline{n}}
\Phi(L-m,N-[(T-1)m+n],T-2)
\sum_{\substack{n_1+...+n_m=n\\n_x\geq0}} \frac{1}{n_1!~...~n_m!}
\end{displaymath}
The first term in Eq. \eqref{Zproof} takes into account the contribution to the sum
defining the partition function of those configurations in which
no site has a number of particles larger or equal to $T-1$.
The second term can be explained as follows:
the first binomial coefficient counts the number of ways one can
choose the $m$ sites such that $\eta_x\geq T-1$. 
Note that $\overline{m}$ denotes
the maximum value attained by $m$, for which it holds: $\overline{m}=\min(L,\lfloor N/(T-1)\rfloor)$. Yet, by requiring
$L\ge \lceil{N/(T-1)}\rceil$, one has $\overline{m}=\lfloor N/(T-1)\rfloor$, as indicated in the statement of the Theorem.
The coefficient $\Phi(L-m,N-[(T-1)m+n],T-2)$ counts
the number of ways
to allocate the remaining $N-[(T-1)m+n]$ particles on the $L-m$ sites for
which it holds $\eta_x\leq T-2$.
The last sum counts the number of ways in which the particles
exceeding $T$, namely, those on the top of the $T-1$ filled
columns, can be distributed on the $m$ sites.
Finally, recalling \eqref{soglia}, we have that
the last factor in the equation is a smart rewriting of the
last factor in \eqref{Z}.
Equation \eqref{partfun} in the theorem finally follows
by using the multinomial theorem (see, e.g., Ref.\ \cite[equation~(3.35)]{17}).
\end{proof}

We remark that, although \eqref{partfun} is not an explicit expression for
the partition function, it is nevertheless very useful.
Indeed, the sum over the configuration space present in the
definition of the partition function, Eq. \eqref{Z}, involves a number of terms
exponentially large in the number of particles $N$, whereas the
sum in \eqref{partfun} is only polynomial in $N$.
Moreover, the expression for partition function given in Eq. \eqref{Z} involves a constraint, namely $|\eta|=N$,
which has been removed in \eqref{partfun}.

It is also interesting to remark that in the simple exclusion--like
regime, namely, $T=N$, the partition function
$Z_{L,N,N}$ can be written explicitly as
\be
\label{partfunBE}
Z_{L,N,N}
=\Phi(L,N,N)
={L \bangle N}
=\binom{L+N-1}{N}
\;\;.
\ee
To prove this formula,
we compute, first, the term $\Phi(L,N,N-2)$ in \eqref{partfun}.
By using \eqref{Phigen} with $i=L$, $j=N$, and $k=N-2$,
noted that $\overline{s}=1$, one finds
\be
\label{Phic}
\Phi(L,N,N-2)
=\Phi(L,N,N)-{L \bangle 1} \binom{L}{1}
=\Phi(L,N,N)-L^2
\ee
Next, we evaluate the sums in Eq. \eqref{partfun}.
Since $\overline{m}=\overline{n}=1$,
one needs to calculate just the terms
$\Phi(L-1,0,N-2)$ and $\Phi(L-1,1,N-2)$.
Since, for both terms, $\overline{s}=0$ for both terms in
\eqref{Phigen}, we get
\be
\label{Phid}
\Phi(L-1,0,N-2)
= {L -1\bangle 0}
=1
\;\;\textrm{ and } \;\;
\Phi(L-1,1,N-2)
={L-1 \bangle 1}
=L-1
\ee
Equation~\eqref{partfunBE} finally follows from
\eqref{Phic}, \eqref{Phid}, and \eqref{partfun}.

Note that the result in Eq. \eqref{partfunBE} could also be directly
deduced by equation \eqref{Z}. Indeed, from the definition
\eqref{soglia} of the intensity function, it follows
that, for $T=N$, the sum in equation \eqref{Z} is indeed
a sum of $1$'s and, thus, yields
straightforwardly the total number of configurations $\Phi(L,N,N)$.

As discussed in the Introduction, the threshold limits the hopping rate on sites whose occupancy number is smaller than the
threshold itself, whereas, when the prescribed value of the threshold is reached, the hopping
rate starts increasing proportionally to the number of particles on the
site.
In this respect, the case $T=N$ is peculiar, because all the sites are updated with the same minimal rate
regardless their occupancy number.

It is also possible
to guess another remarkable result: namely, when the threshold,
although smaller than $N$, scales proportionally to $N$, then the stationary current is close, for large $N$, to the value obtained for $T=N$.
More precisely, take $\alpha<1$, $\alpha\in \mathbb{R}^{+}$ and sufficiently close to $1$, and compare the
canonical partition function of the systems with $N$ particles and thresholds equal, respectively, to $\alpha N$ and $N$.\\
We thus conjecture that for $N\to\infty$
\begin{equation}
\label{comp}
\frac{Z_{L,N,\alpha N}}{Z_{L,N,N}}=1+o(1) \nonumber
\end{equation}
where $o(1)$ denotes a function tending to zero in the limit $N\to\infty$.

We omit, here, the lengthy algebraic details, and we just mention
that this observation may be relevant in the study of the hydrodynamic limit of heterogeneous ZRP, in which the hopping rate from a given site
can be modified so as to scale with the size of the system. 

\section{Stationary currents and numerical simulations}
\label{s:currents}
In this Section we report and compare both
analytical and numerical results for the steady current in the ZRP with threshold
introduced in Sec. \ref{s:modello}.

Numerics have been performed via Monte Carlo techniques
by simulating the model as follows: call $\eta(t)$ the
configuration at time $t$,
a number $\tau$ is chosen at random with
exponential distribution of
parameter $\sum_{x=1}^Lg_T(\eta_x(t))$ and time is update to
$t+\tau$,
a site is chosen at random with probability
$g_T(\eta_x(t))/\sum_{x=1}^Lg_T(\eta_x(t))$ and a particle
is moved from such a site to its right with probability $p$ and
to its left with probability $1-p$.

The Monte Carlo simulation is let, first, evolve for a number of time steps $n_{0}\sim10^7$, and
the stationary current is thus defined as the ratio of the difference
between the total number of particles jumping from site $L$ to site $1$
and that of particles jumping from site $1$ to site $L$,
to the total time.
We remark that the initial number of time steps $n_{0}$ is chosen large enough to guarantee that  a constant value, with respect to
time, is reached by the current.

As for the analytical results on the current, note that
the theory developed in the Sections above
paves the way to the computation of the stationary current
for any finite value of the parameters of the model, $N$, $L$, and $T$.
We stress that we are considering a transport problem in which 
a net convective flux occurs in the case $p\neq 1/2$.
Equations \eqref{curr} and \eqref{partfun} can be used to reduce the computation of the
stationary current to an algebraic sum.
In particular, in the two limiting cases $T=1$ and $T=N$ analytic formulae can be derived.

Indeed,
from Eqs. \eqref{curr} and \eqref{partfunFD}, the steady state current for
$T=1$, i.e. in the independent particle case, reads
\be
\label{curr1}
J_{L,N,1}=(2p-1) \frac{N}{L}
\ee
On the other hand, Eqs. \eqref{curr} and \eqref{partfunBE}
imply that, for $T=N$, i.e. in the simple exclusion--like
regime, the current is given by
\be
\label{currN}
J_{L,N,N}=(2p-1) \frac{N}{L}\frac{1}{1+\frac{N}{L}-\frac{1}{L}}
\ee
We stress that the two results above are valid for any
finite volume $\Lambda$ and for any finite number of particles.
If the limit $N,L\to\infty$ with $N/L=\varrho$ is considered,
the well-known hydrodynamic limit is found for the current,
see Ref.\ \cite[equation (1.3)]{CR1997}.

\begin{figure}[htbp!]
\begin{center}
\includegraphics[width=0.4\textwidth]{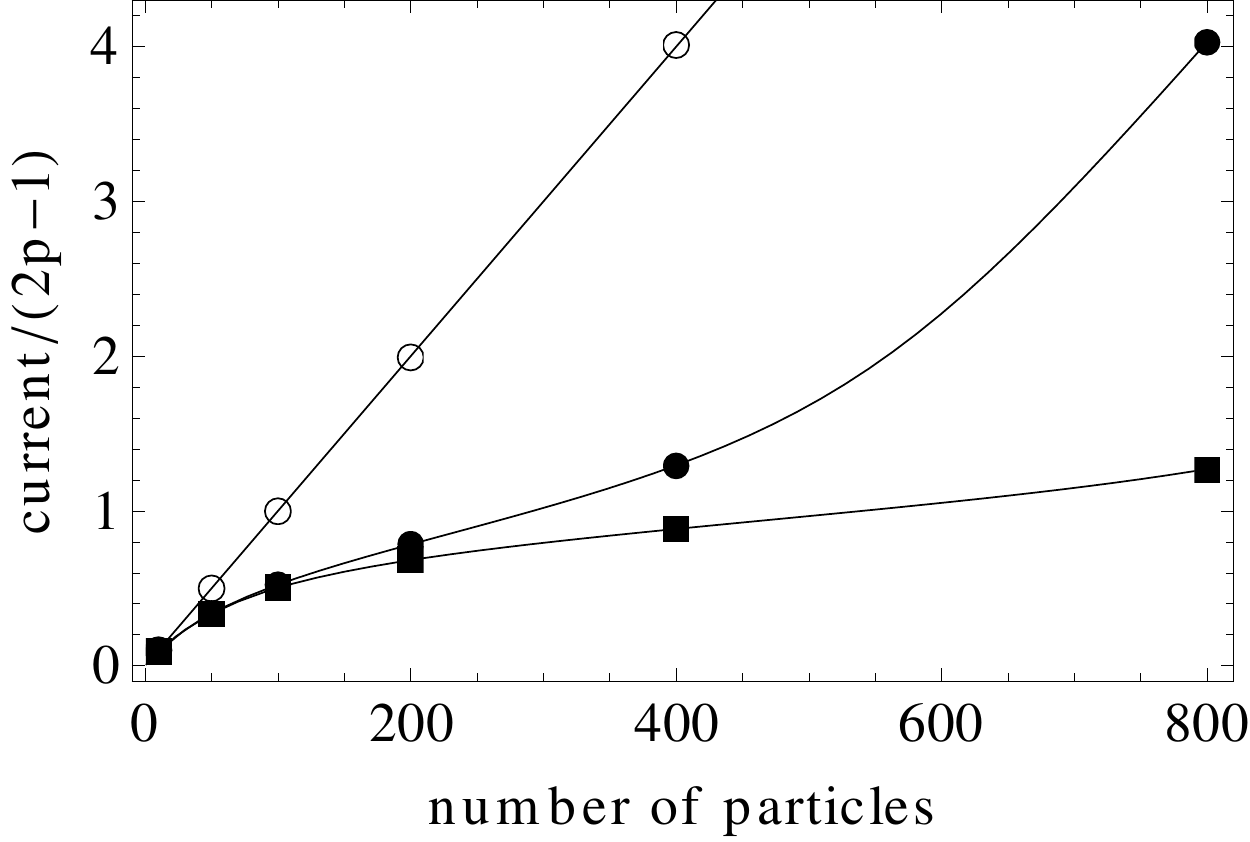}
\hskip 1 cm
\includegraphics[width=0.4\textwidth]{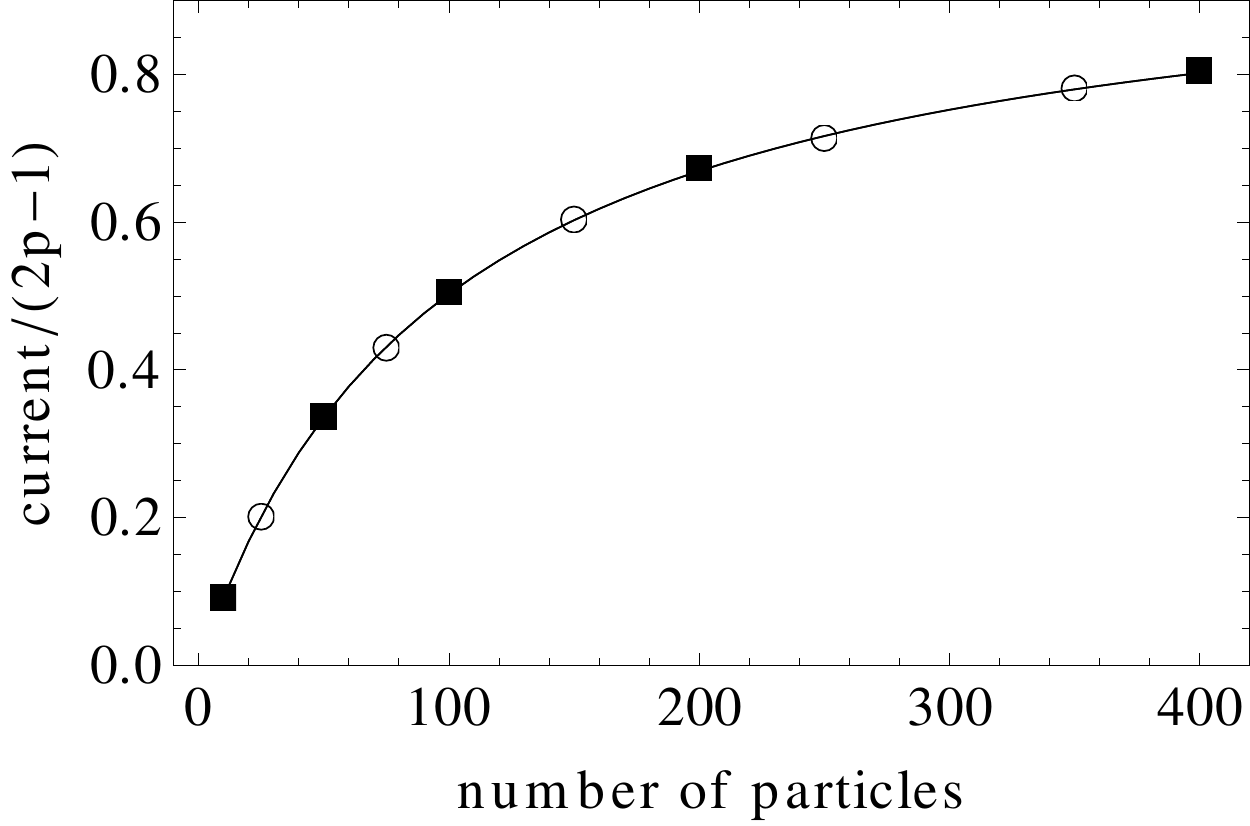}
\caption{Current versus number of particles for $L=100$.
\textit{Left panel}:
Open circles, solid circles, and solid squares denote,
respectively, the simulated stationary current for the
threshold $T=1,5,10$.
The associated solid lines represent the analytic solutions:
Eqs. \eqref{curr} and \eqref{partfun} have been used in the cases
$T=5$ and $T=10$, whereas the explicit formula in Eq. \eqref{curr1} has been
used for $T=1$.
\textit{Right panel}:
Open circles and solid squares denote,
respectively, the simulated stationary current for the
threshold $T=N$ and $T=N/2$.
The associated solid lines represent the analytic solutions
computed by using
Eqs. \eqref{curr} and \eqref{partfun} for $T=N/2$ and
\eqref{currN} for $T=N$.
The two solid lines are not distinguishable in the picture.
}
\label{fig:t1510}
\end{center}
\end{figure}

Coherently with its physical interpretation, the effect of the
activation threshold is that of slowing down the current.
As $T$ is increased the steady state current decreases.
In particular it is worth mentioning that at $T=1$ the
current is a linear function of the number of
particles $N$, whereas at $T=N$ the current saturates to a limiting
value when $N$ is increased.
For intermediate thresholds, namely, $1<T<N$, the current
increases slowly with $N$ and only after a certain value it
starts growing linearly.
This effect is clearly illustrated in Fig.~\ref{fig:t1510} (left panel)
where the current is plotted versus the total number of
particle for the values $T=1,5,10$ of the activation threshold and
$L=100$.

Data in Fig.~\ref{fig:t1510} (right panel)
refer to the cases $T=N$ and $T=N/2$, with $L=100$.
The saturation effect on the current due to the presence
of the threshold $T=N$ (simple exclusion--like regime) is clearly illustrated.
In other words, when the hopping rate is constantly equal to one and does
not depend on the number of particles on the site, the current
tends to saturate to a constant value for $N$ large.
It is worth remarking that the same effect
is also observed when the threshold is equal to $N/2$, suggesting
that for an activation threshold increasing proportionally to the
number of particles, the current is reduced in the same fashion.
This property is indeed an immediate consequence of Eqs.
\eqref{curr} and \eqref{comp}.
Deviation from the $T=N$ behavior in the case $T=N/2$ can in fact be
observed for small values of $L$ and $N$.

\section{Conclusions}
\label{s:concl}
\par\noindent
We considered the problem of computing the steady state current in a Zero Range Process subjected to a drift
as well as to an ``activation'' threshold affecting the hopping rates of the particles to the neighboring sites.
By exploiting combinatorial arguments, 
we derived an exact formula for the partition function,
which is amenable to an analytical treatment for $T=1$ and $T=N$.
We also discussed the asymptotic behavior of the partition function when the threshold
scales proportionally to the number of particles: the latter case is of particular relevance in the discussion
of the hydrodynamic limit of the model.
We then obtained explicit formulae for the particle current, also supported by Monte Carlo simulations,
revealing that the main effect of the activation threshold on the steady state dynamics is to decrease the current,
thus tuning between two limiting regimes, the independent particle model and
the simple exclusion--like process.
We also remarked that this last behavior is shown by the model 
even for $T<N$, provided the threshold increases proportionally 
to the number of particles.

\begin{acknowledgements}
ENMC expresses his thanks to ICMS (TU/e, The Netherlands) for
kind hospitality and financial support.
\end{acknowledgements}

\end{document}